\documentclass[10pt, journal]{IEEEtran}

\usepackage{multirow}
\usepackage{graphicx,comment,amssymb,amsthm} 
\usepackage{tabularx}
\usepackage{url}
\usepackage{color}
\usepackage{epstopdf}
\usepackage{autonum}
\usepackage{enumitem}
\usepackage{dsfont}

\newcommand{\fref}[1]{Fig.~\ref{#1}}
\newcommand{\eref}[1]{(\ref{#1})}
\newcommand{\sref}[1]{Sec.\,\ref{#1}}

\newcommand{\thref}[1]{Theorem\,\ref{#1}}

\newtheorem{theorem}{Theorem}

\newtheorem{proposition}{Proposition}

\setlist[itemize]{leftmargin=*}
\setlist[enumerate]{leftmargin=*}
\newcommand{\PP}{\mathds{P}} 
\newcommand{\EE}{\mathds{E}} 

\begin{document}

\title{Characterizing Delay and Control Traffic of the Cellular MME with IoT Support
\thanks{A sketch of the present work was included in our poster presented at ACM Mobihoc 2019 \cite{nostro-poster}.}
}

\author{Christian~Vitale,~\IEEEmembership{Member,~IEEE,}
        Carla Fabiana~Chiasserini,~\IEEEmembership{Fellow,~IEEE,}
        Francesco~Malandrino,~\IEEEmembership{Senior~Member,~IEEE,} and~Senay~Semu~Tadesse,~\IEEEmembership{Member,~IEEE}

\IEEEcompsocitemizethanks{\IEEEcompsocthanksitem C.~Vitale is with KIOS Center of Excellence, Cyprus. C.~F.~Chiasserini and S. Tadesse are  with Politecnico di Torino, Italy. C. F.~Chiasserini and F. Malandrino are with CNR-IEIIT, Italy. C.~F.~Chiasserini and F.~Malandrino are also with CNIT, Italy.
} 
}

\maketitle

\begin{abstract}
One of the main use cases for advanced cellular networks is represented by  massive Internet-of-things (MIoT), i.e., an enormous number of IoT devices that transmit data toward the cellular network infrastructure. To make cellular MIoT a reality,  data transfer and control procedures specifically designed for the support of IoT are needed.   
For this reason,   3GPP has introduced the Control Plane Cellular IoT optimization, which foresees a simplified bearer instantiation, with  the Mobility Management Entity (MME)
handling both control and data traffic. The 
performance of the MME has therefore become  critical, and properly scaling its 
computational capability can determine the ability of the whole network to tackle
 MIoT effectively. 
In particular, considering virtualized networks and the need for an efficient allocation of computing resources, it is paramount to  characterize the MME performance as the MIoT traffic load changes.
We address this need by presenting compact,
closed-form expressions linking the number of IoT sources with the rate at
which bearers are requested, and such a rate with the
delay incurred by the IoT data. We show that our analysis, supported by testbed experiments and verified through large-scale simulations, represents a valuable tool to make effective  scaling decisions in virtualized cellular core networks.
\end{abstract}

\section{Introduction} 

Massive Internet-of-things (MIoT) is an umbrella term for a fairly diverse set of applications, including smart factory, cloud robotics, automotive leveraging smart city sensors, and surveillance/security; as such, it represents one of the main motivations behind 5G~\cite{5gppp-usecases}. For all these applications, service latency is a critical constraint, even more so than sheer network throughput. Also, IoT applications are characterized by  a very  high density of devices, up to 10,000~devices/km$^2$~\cite[p.~6]{5gppp-usecases}, and peculiar traffic patterns: devices may be inactive for a long time, and then multiple devices may transmit data in a (almost) synchronized manner.

Such traffic patterns are a poor match for the default procedures followed by the cellular core network  and such a mismatch may jeopardize the application-latency requirements. Indeed, before a terminal can transmit data packets toward the cellular infrastructure, typically the following operations are required:  authentication,  identity verification, and bearer establishment.  If the terminal remains silent longer than a timeout, the bearer is released and the whole procedure has to be performed again. Thus, for  MIoT traffic, bearer instantiation (including bearer establishment and release) is one of the most critical tasks: using the default procedures would result in an exceedingly high latency and  control overhead, compared to the data traffic  generated by an MIoT device. 

To cope with that, 3GPP has introduced a new standard~\cite{3gppepc}, called Control Plane Cellular IoT Evolved Packet System (CIoT) optimization, which is  already available in off-the-shelf products \cite{cisco}. Such a standard  (i) simplifies the procedures, roughly halving the associated overhead, (ii) uses the  Mobility Management Entity (MME) of the cellular core network to forward user-plane traffic, 
and (iii) limits the involvement of MIoT sources in bearer establishment procedures, hence reducing the  power consumption. Importantly, since under the CIoT optimization the MME is in charge of both control- and user-plane processing,  it bears the brunt of MIoT traffic, thus becoming  the pivotal component of the cellular core. It follows that the MME performance and the associated delay determine the ability of the network as a whole to support MIoT traffic.

Ensuring that the MME has sufficient computational capability to efficiently process the traffic load generated by MIoT sources becomes even more sensitive in the context of network {\em softwarization}. Such a paradigm refers to a global trend towards replacing special-purpose network equipment -- including the entities of the cellular core~\cite{baumgartner2015mobile} -- with virtualized network functions (VNFs) running on general-purpose hardware. In the case of a virtual Evolved Packet Core (EPC) \cite{v-epc}, the number of MME instances and their computational capability can be {\em scaled} to adapt to the variations in the current and expected MIoT traffic they must process. In particular, in the case of the MME, effective scaling requires:
\begin{itemize}
\item characterizing the relation between the number of MIoT sources and the arrival rate
of bearer requests at the MME;
\item modeling the impact of the MME capacity on the delay introduced by the bearer establishment procedure.
\end{itemize}

In this paper, we study both the above aspects with reference to the case where a network operating according to the CIoT optimization serves MIoT traffic. Specifically, our main contributions are  as follows:

\begin{itemize}[leftmargin=0.1cm]
    \item[{\em (i)}] We begin by characterizing analytically the time between consecutive bearer instantiation requests coming from MIoT sources, proving that it is well described by an exponential distribution;
    \item[{\em (ii)}] By running and profiling the components of a real-world EPC implementation, we make some fundamental observations on the system that we then exploit to develop our analytical model;
	\item[{\em (iii)}] Leveraging the analytical results on MIoT traffic patterns and the experimental observations,  
we build an M/D/1-PS queuing model of the MME and study how the bearer instantiation time depends on (a) the traffic load, i.e., the arrival rate of bearer requests, and (b) the computational capability assigned to the MME itself. Importantly, we obtain a novel {\em closed-form} expression for the packet forwarding delay as a result of our analysis;
	\item[{\em (iv)}] We show that the obtained analytical results represent a powerful tool to drive real-time scaling decisions in softwarized cellular networks when dealing with delay-sensitive applications;
	\item[{\em (v)}] We validate our analysis through large-scale simulations using both a synthetic traffic model based on the 3GPP standard, and a real-world scenario including topology and mobility information from the city of Monte Carlo, Monaco.
\end{itemize}

The remainder of the paper is organized as follows. After introducing our system model in Sec.~\ref{sec:model}, we present our analysis and a closed-form expression for the characterization of the bearer request arrival process in Sec.~\ref{sec:input_MME}.  In Sec.\,\ref{sec:epc-model}, we run some experiments and make  useful observations to develop our analytical model of the cellular core network. Furthermore,   we  characterize the core network delay performance. 
Through   detailed simulations using both synthetic and real-world traffic, in Sec.~\ref{sec:results} we show how our analysis can be used to effectively tune the computational capability of a  vEPC.   
Finally, we review related work in Sec.~\ref{sec:related} and conclude the paper in Sec.~\ref{sec:conclusions}.

\section{System Model and Preliminaries}\label{sec:model}

Here, we present the CIoT bearer instantiation procedure~\cite{3gppepc} and how the MIoT traffic is served when such a procedure is adopted. In particular, we consider the current cellular network, namely, the Evolved Packet Core (EPC), which is   briefly introduced  in Sec.\,\ref{subsec:enhancements_IOT}. Then we detail the CIoT bearer instantiation procedure and its relevance to the NB-IoT standard in Sec.\,\ref{subsec:bearer_diagram}. Finally, we describe the model we adopt for the IoT traffic, in  Sec.\,\ref{subsec:iot_traffic_model}.  

\subsection{Evolved packet core network}
\label{subsec:enhancements_IOT}

IoT cellular traffic has to traverse the EPC network, which includes four main components, as depicted in \fref{EPC}:
\begin{itemize}
\item the Serving Gateway (S-GW) mainly routes data traffic and acts as anchor point when User Equipments (UEs) move from one eNB to another;
\item the PDN Gateway (P-GW) acts as ingress and egress point of the mobile access network; 
it  is also the responsible for policy enforcement; 
\item the Mobility Management Entity (MME) is the termination point of UE control channels. The MME authenticates and tracks registered UEs and, most importantly, it  handles bearer activation, i.e., it is the MME that creates a data path between the UEs and the P-GW. When CIoT optimization is in place, the data path between the UE and the P-GW includes the MME itself, since the MME is also responsible for relaying the traffic of the MIoT sources to the correct S-GW (see \fref{EPC}); 
\item the Home Subscriber Server (HSS) is a central database where UE-related information is stored. The HSS assists the MME in UE authentication.
\end{itemize}
Note that the MME is connected  to the S-GWs for bearer establishment and, under the CIoT optimization, it also performs packet decryption/forwarding, while the P-GW handles the data traffic to/from several S-GWs. Importantly, in the case of a vEPC, the MME, P-GW, and S-GW typically run on different (virtual) machines whose number and capability can be adjusted as needed.

\begin{figure}
\centering
	\includegraphics[width=1\columnwidth]{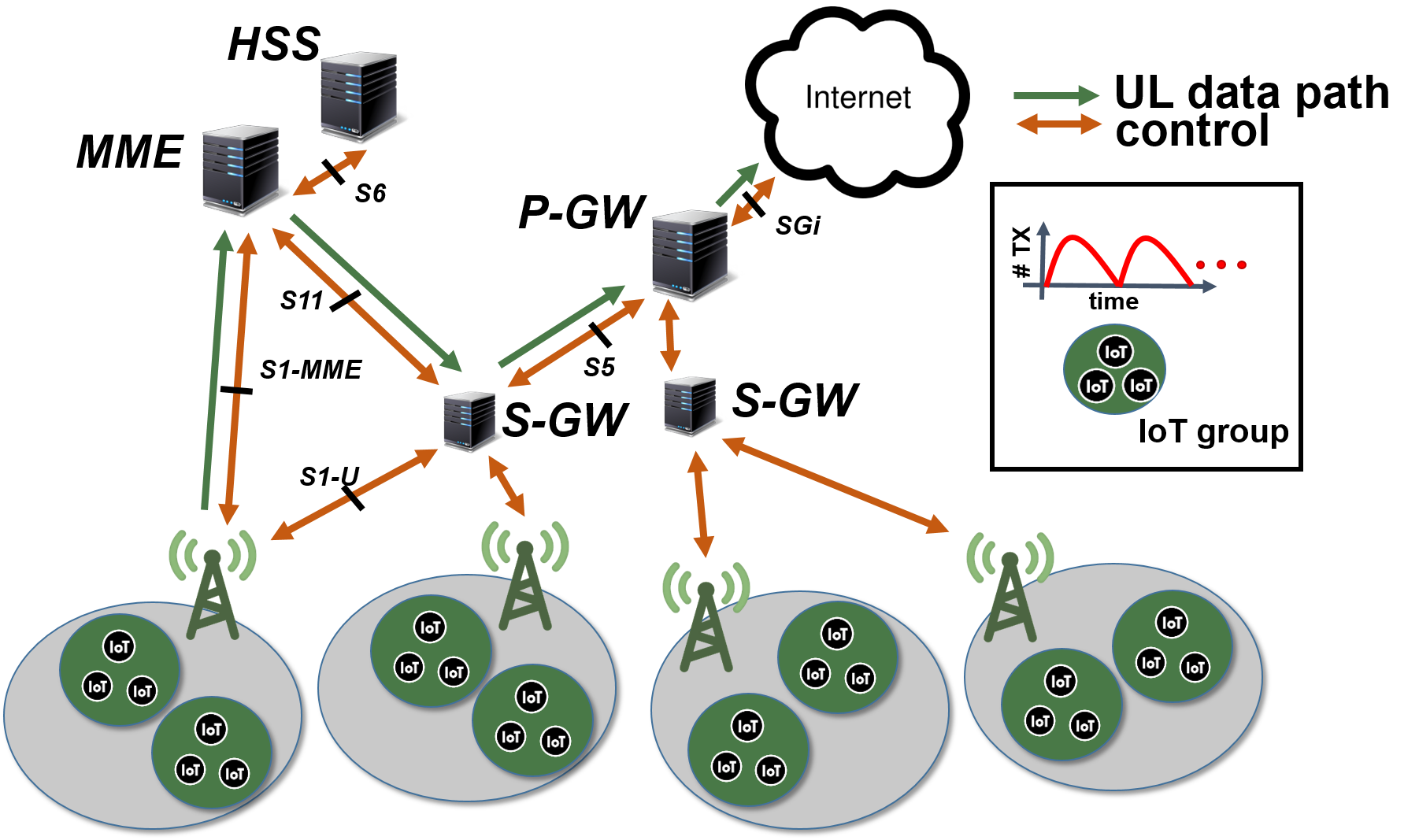}
\caption{EPC architecture.}
\label{EPC}
\end{figure}

\subsection{CIoT bearer instantiation procedure}
\label{subsec:bearer_diagram}
Exactly as any other cellular transmitter, an MIoT source sends or receives data traffic only if a logical connection with the corresponding P-GW is in place, i.e.,  if the MME has completed the bearer instantiation procedure. However, unlike the ordinary procedure, the CIoT optimization foresees that bearers are released immediately after packet transmission/reception, unless an MIoT source  explicitly signals the presence of imminent traffic. As a consequence, an established  bearer lasts for quite a short time and no handover procedure is typically required  for MIoT traffic at the MME level. In the following, we therefore focus only on the performance of the MME when handling bearer instantiation procedures.

As depicted in \fref{diagram},  
each time an MIoT source has to transmit a packet, five operations are performed: (i) authentication, (ii) identity verification, (iii) bearer establishment, (iv) forwarding (after data decryption and integrity check) of the data packets piggybacked by the MIoT source in the Radio Resource Control (RRC) Early Data Request message, and (v) bearer release. 
Specifically, hereinafter   {\em bearer establishment}  will  refer to the set of operations comprised between step 1 and step 6 (included) in \fref{diagram}. 
 We remark that such a procedure represents a crucial contribution to the data forwarding  latency, and it cannot be overlooked in the MIoT data delay computation.  Indeed, the time needed to complete a bearer establishment also corresponds to the delay incurred by the data  transfer within the EPC.

\begin{figure}
\centering
	\includegraphics[width=1.03\columnwidth]{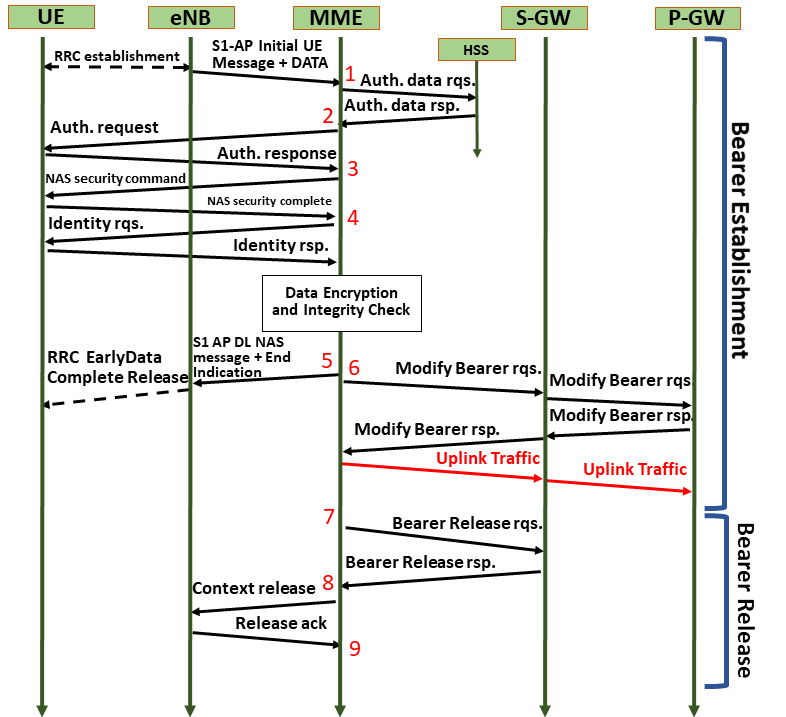}
\caption{CIoT bearer instantiation procedure and uplink data transmission.}
\label{diagram}
\end{figure}

Finally, it is worth remarking that CIoT well pairs up with the Narrowband IoT (NB-IoT) standard -- both being specifically designed to support massive IoT traffic, taking, respectively, the core network side and the radio access perspective. Indeed, NB-IoT is an IoT system built from existing LTE functionalities, which aims to support over 50,000 low-data rate stationary devices within a cell-site sector \cite{NB-IoT-1}. 
 Beside defining an energy-efficient and robust physical layer for enhanced indoor coverage, NB-IoT also effectively addresses cell search, synchronization, and random access for initial link establishment. Specifically,  according to NB-IoT, the RRC establishment on the top-right of \fref{diagram}  summarizes the following steps \cite{NB-IoT-2}:
(a) the UE transmits a random access preamble; (b) the eNB replies with a random access response including a timing advance command and which uplink resources are assigned the UE to perform (c); (c) the UE transmits its identity;  (d) the eNB transmits a message to resolve any contention due to multiple UEs accessing the channel (in step (a))  using the same preamble.

\subsection{IoT traffic model}
\label{subsec:iot_traffic_model}

As mentioned above, after data transmission/reception, the MIoT source's bearer is released and a new bearer has to be established if later on the MIoT source has some more traffic to send/receive. Intuitively, depending on the IoT traffic pattern, the time between subsequent data packets may vary significantly, and so does the rate of bearer instantiation requests of an MIoT source. 
In order to characterize the arrival process of bearer requests, in the following we consider the  traffic model described by the 3GPP standard \cite{3gppmtc} for machine type communications.

In~\cite{3gppmtc}, MIoT sources are organized in groups. Reflecting real-world operation conditions, \cite{3gppmtc} envisions quasi-synchronous packet transmissions within a group. This represents, for example, a group of sensors monitoring a geographical area, programmed to raise an alarm when a specific event occurs. After the occurrence of the event of interest, e.g., a gas leak, the closest sensors to the event raise the alarm. Sensors neighbouring the area where the event occurred react to the event subsequently, with a delay due to the propagation of the phenomenon. Such an effect triggers alarms from all the sensors belonging to the group, with a peak of alarms (hence, of traffic) roughly at the center of a period and an aggregate traffic distribution over time that follows a Beta(3,4). In~\cite{3gppmtc}, the events, 
and the related group transmissions,  occur in subsequent periods of duration $T$, each of which with an aggregate traffic distribution over time following a Beta(3,4).

Notice that this model is quite general. Indeed, modifying $T$ allows us to account for different aggregate transmission rates, while setting the group size to 1 allows us to represent MIoT sources behaving independently from each other. Furthermore, as we explain later in \sref{sec:input_MME}, the model can be easily adapted to include data aggregators (a.k.a. gateways) that, as often envisioned in sensor network applications, collect and forward the data packets generated within groups of MIoT devices.

The traffic model specified by the 3GPP standard represents the aggregate behavior of a set of MIoT sources. However, we are interested in characterizing the latency of the data transfers by individual sources, each of which requires a bearer instantiation.  In order to address this issue, we leverage the data generation model  for  an individual IoT device  presented in \cite{laner2013traffic}, which results in an    aggregate group traffic that still matches the Beta(3,4) distribution specified by the 3GPP standard. 

In \cite{laner2013traffic}, each MIoT source is modeled as a Markov chain including two states, named {\em regular operation} and {\em alarm}, and hereinafter denoted with $R$ and $A$, respectively. The period $T$ of the IoT traffic pattern is divided into an arbitrary number~$N$ of slots, each of duration $\delta$. 
In state $A$, the MIoT source sends packets according to a Poisson process with mean $\lambda_A$, which, without loss of generality and consistently with \cite{laner2013traffic}, we set  equal to $1$ packet/slot, i.e., an MIoT source successfully transmits at least one data packet with probability $(1-e^{-1})$.  Introducing such a probability of transmitting (at least) one packet while being in state $A$ allows capturing communication aspects that may arise in real-world  IoT scenarios, e.g., transceiver failure or harsh propagation conditions due to the unfavourable IoT location. In state $R$, instead, the MIoT source transmits packets with an arbitrary small rate $\epsilon$, representing, e.g., keep-alive or synchronization messages (in \cite{laner2013traffic} the average transmission rate in state $R$ is set  to $\lambda_R \mathord{=} 0.0005$ packet/s).

In each slot $n$ ($n= 1,...,N$) within a period $T$, the MIoT source may move from one state to the other. When in $A$, the source moves to $R$ in the next time slot with probability $1$. When in $R$, the source moves to $A$  in time slot $n$ with probability mass function (pmf)\footnote{The pmf of a discrete random variable $x$ at $n$ will be denoted by $f_x(n)$. The evaluation at $n$ of the pmf of $x$ conditioned to the random variable $y$, when $y=m$, will be denoted by $f_x(n|y\mathord{=}m)$
Also, we will indicate with $\PP(X)$ the probability of a specific event $X$. The probability density function (pdf) and the cumulative density function (CDF) of a continuous random variable $x$ will be denoted by $f_x(y)$ and $F_x(y)$, respectively.} $f_b(n)$, which depends on the considered slot in the period.  As shown in  \cite{laner2013traffic}, $f_b(n)$ can be obtained from the sampling of the Beta(3,4) shape\footnote{Note that the Beta(3,4) distribution is only defined in $[0,T]$.}, as follows: 
\begin{equation}
\label{eq_probability}
f_b(n) = \text{Beta}\left(\frac{n \delta}{T}\right)\frac{\delta}{T} = 60\left(\frac{n \delta}{T}\right)^2 \left(1\mathord{-}\frac{n \delta}{T}\right)^3\frac{\delta}{T} \,.
\end{equation}

In summary, given the above IoT traffic model, on average, an MIoT source visits state $A$ once every $T$ seconds, and therein it transmits a packet with probability  $(1-e^{-1})$. Instead, when an MIoT source sojourns in state $R$, it transmits a packet every $1/\epsilon$ s. In the following, we assume that $\epsilon$ is small enough so that we can neglect the occurrence of a packet transmission in state $R$  (this holds, e.g., setting $\epsilon$ to the value suggested in \cite{laner2013traffic}).

Finally, from \eref{eq_probability}, we observe that an MIoT source moves from state $R$ to state $A$ with a probability that only depends on slot $n$, not on the past, nor on the activity of other MIoT sources. It is important to point out two aspects that justify the use of such a model: \\
\noindent
(i) even if in the model the single IoT traffic does not depend on the past, the obtained overall aggregate traffic of a group still follows the Beta(3,4) distribution suggested by the 3GPP specification, i.e., the group aggregate traffic follows the typical
pattern of a set of sensors reacting to a specific event; \\
\noindent
(ii) from the perspective of the core network, it does not matter which IoT sensor within a group triggers an alarm; indeed, each IoT transmitter within a group performs the same bearer instantiation procedure and introducing spatial correlation between IoT sensors activity does not have any impact on the MME load and the resulting distribution of the packet forwarding latency.


\begin{table}[h!]
\caption{Table of notations \label{tab:notations}}
\centering
\begin{tabular}{|| c | c ||} 
 \hline
 Symbol & Variable \\
 \hline
  \hline
 $\delta$   & slot duration \\
  \hline
 $T$        & time between events monitored by MIoT groups \\  
 \hline
 $N$        & number of slots in a period $T$ \\
 \hline 
 $Q$        & number of MIoT sources served by the EPC \\
 \hline
 $f_b(n)$ ($f_b(n)$)   & probability  of transition from $R$ \\
&             to $A$ in slot $n$ (time $t$) \\ 
 \hline
 $\beta$    & time between bearer requests at the EPC \\
 \hline 
 $s$        & slot (time) of the last bearer request \\ 
 \hline 
 $s_q$      & slot (time) of the last bearer request by  source $q$ \\ 
 \hline 
 $\alpha$   & time between the last bearer request and \\ 
            & the next transition to $A$ by any source \\
 \hline
 $\alpha_q$ & time between the last bearer request and  \\ 
            & the next transition to $A$ by source $q$ \\             
 \hline
 $\omega_q$ & offset of the time reference of source $q$ \\ 
            & with respect to source $0$ \\  
 \hline
 $E(z,\lambda_\alpha)$ & Erlang CDF with shape $z$ and rate $\lambda_\alpha$\\
 \hline                  
  $\lambda_x$ & rate of the exponential random variable $x$\\
 \hline 
  $O_X$  	& number of  CPU operations per bearer \\
&  procedure  for EPC entity $X$\\
 \hline 
  $C_X$  	& capacity, in CPU operations/s, of entity $X$\\
 \hline 
  $d$ 		& time between a bearer request and its completion, \\
 \hline
  $v$ 		& delay due to the MME of the \\
			& bearer establishment  procedure \\
 \hline 
  $K$ 		& constant delay due to all EPC entities, other \\
			& than the MME,  in bearer establishment  \\
 \hline 
\end{tabular}
\end{table}

\section{IoT Control Traffic Characterization}
\label{sec:input_MME}

To evaluate the delay performance of the MME when the CIoT optimization is supported,  we first prove that the  arrival process of the bearer instantiation requests at the MME follows a Poisson distribution.  
To this end, in this section we  derive $F_\beta(\cdot)$, the cumulative distribution function (CDF)   of the time interval between subsequent bearer instantiation requests at the MME. The steps we perform are summarized below:\\
\noindent
{\em (i)} we observe that, under the CIoT optimization, every time an MIoT source has a new packet to transmit, the MME has to establish a new bearer and forward the packet to the right S-GW. Thus, the time interval between subsequent bearer instantiations by the MME corresponds to the time interval between packet transmissions by any of the MIoT sources served by the MME; 
\vskip 1mm
\noindent
{\em (ii)} we then derive $F_\alpha(\tau | s=t)$, the distribution of the time interval between a packet transmission by any source in the system  and the subsequent visit to state $A$ by any, potentially different, MIoT source;
\vskip 1mm
\noindent
{\em (iii)} for $\delta\rightarrow 0$, we prove  that such a distribution does not depend on the time of the last transmission in the system and turns out to be exponential. Furthermore, the result holds also for the  time interval between subsequent packet transmissions, i.e., the  inter-arrival time of bearer requests  at the MME.  

All notations we adopt are summarized in Table\,\ref{tab:notations}; we also mention that the term  ``packet transmission'' is often used interchangeably with ``bearer request''. 

\subsection{Inter-arrival time between bearer requests}

In the following, we consider $Q$ MIoT sources served by the same MME, generating traffic according to the  3GPP model described in \sref{subsec:iot_traffic_model}. 
As the first step,  we fix to $k$ the time slot at which the last transmission in the system occurred and we compute $f_\alpha(m|s\mathord{=}k)$, i.e., the probability density function (pdf) of the time interval between $k$ and the slot in which the first device,  among the $Q$ MIoT sources,  moves to state $A$. 
It is easy to see that $f_\alpha(m|s\mathord{=}k)$ can be written as the minimum over the time intervals between $k$ and the first visit to $A$ of the $Q$ MIoT sources, i.e.,   
\begin{equation}
\label{iot_alpha_distribution}
f_\alpha(m|s\mathord{=}k) = f_{\min (\alpha_q)}(m|s\mathord{=}k) \,.
\end{equation}
In the above expression, $\alpha_q$ is the time interval between $k$ and the transition to state $A$ of the  MIoT source $q$, and $f_{\min (\alpha_q)}(\cdot)$ is the pdf of the minimum over the $\alpha_q$'s. 

Using (\ref{iot_alpha_distribution}) and considering the fact that in the adopted MIoT traffic model, MIoT packet transmissions are independent of each other, the CDF $F_\alpha(m|s\mathord{=}k)$ can be obtained as the minimum among random variables:
\begin{equation}
\label{iot_MME_distribution_CDF}
F_\alpha(m|s\mathord{=}k) = 1\mathord{-}\prod_{q\mathord{=}1}^Q \left ( 1\mathord{-}F_{\alpha_q}(m|s\mathord{=}k) \right ).
\end{equation}

As already mentioned, an MIoT source moves from state $R$ to state $A$ with a probability that depends only on the slot within period $T$  corresponding to time $k$, i.e., on  $k$ only and not on the past. Thus, in  the following proposition, we can  prove that \eref{iot_MME_distribution_CDF} can be computed as if any MIoT source $q$  transmitted its last packet in $k$,  i.e., denoting with $s_q$ the slot of the last packet transmission\footnote{Since different groups are not syncronized with each other, i.e., time $k$ corresponds to different slots within the period of different MIoT sources, $F_{\alpha_q}(m|s_q\mathord{=}k)$ depends on MIoT source $q$.} by $q$, $F_{\alpha_q}(m|s\mathord{=}k) = F_{\alpha_q}(m|s_q\mathord{=}k)$, $\forall q$. This is an important property, which  allows us to greatly 
  simplify the subsequent derivations. 

\begin{proposition}
\label{lemma_alpha}
Denote with $s$ the variable representing the slot in which the last packet transmission in the system by any of the MIoT sources occurred, and with $\alpha_q$ the time interval between slot $s$ and the subsequent transition of the $q$-th MIoT source to state $A$. Since the IoT traffic does not depend on the past, $\alpha_q$ can be computed as if the last packet transmission in the system was by $q$. Denoted with $s_q$ the slot of the last packet transmission by $q$, for a sufficiently large number of slots in period $T$, i.e. for a small $\delta/T$, we get: 
\begin{equation}
\label{second_property}
F_\alpha(m|s\mathord{=}k)  = 1\mathord{-}\prod_{q\mathord{=}1}^Q \left ( 1\mathord{-}F_{\alpha_q}(m|s_q\mathord{=}k) \right ).
\end{equation}
\end{proposition}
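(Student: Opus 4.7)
The plan is to exploit the Markovian structure of each source's state chain: since the $R\to A$ transition at slot $n$ has probability $f_b(n)$ that depends only on $n$, and since $A\to R$ is deterministic after a single slot, the future of source $q$ (and hence the law of $\alpha_q$) depends on the past only through $q$'s state at slot $k$. The proposition then follows once I show that, as $\delta/T\to 0$, the forward CDF of $\alpha_q$ is asymptotically the same under the two conditionings $s=k$ and $s_q=k$, so that the right-hand side of \eref{iot_MME_distribution_CDF} collapses to \eref{second_property}.

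First I would write each survival function explicitly. Conditioning on $s_q=k$: since in-state-$R$ transmissions are negligible, $q$ was in $A$ at slot $k$, hence deterministically in $R$ at $k+1$, so $\PP(\alpha_q>m\mid s_q=k)=\prod_{j=2}^{m}(1-f_b(k+j))$. Conditioning on $s=k$ with $s_q<k$: source $q$ did not transmit at $k$, so it is either in $R$ at $k$ or in $A$ at $k$ without a successful transmission (the latter with probability $e^{-1}f_b(k)=O(\delta/T)$). On the dominant event ``$q\in R$ at $k$'' the survival function is $\prod_{j=1}^{m}(1-f_b(k+j))$. The two expressions differ by the single factor $(1-f_b(k+1))$, and by \eref{eq_probability} this factor tends to $1$ as $\delta/T\to 0$.

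To finish, I would decompose via the law of total probability over the disjoint events $\{s_q=k\}$, $\{s_q<k,\ q\in R\text{ at }k\}$, and $\{s_q<k,\ q\in A\text{ at }k\}$: the first and third components deliver the survival function conditioned on $A$ at $k$, the second delivers the one conditioned on $R$ at $k$. Since these differ by $1+O(\delta/T)$, both mixtures converge to the common limit $F_{\alpha_q}(m\mid s_q=k)$, uniformly in $q$. Plugging this termwise into \eref{iot_MME_distribution_CDF} yields \eref{second_property}.

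The hardest part is not the per-source analysis above but the preservation of the product structure across the $Q$ sources. Conditioning on $s=k$ couples the sources slightly---if $q$ was not in $A$ at $k$, then some other $q'$ is marginally more likely to have been---so the $\alpha_q$'s are not strictly independent under this conditioning. Showing that the induced dependence is itself $O(\delta/T)$ uniformly in $q$, so that the product in \eref{iot_MME_distribution_CDF} actually converges to the product in \eref{second_property}, is the delicate step; the uniform bound $\max_n f_b(n)=O(\delta/T)$, combined with the model's across-source independence of the underlying Markov chains, is the key tool.
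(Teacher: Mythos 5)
Your proposal is correct and follows essentially the same route as the paper's own proof: both arguments reduce the discrepancy between conditioning on $s=k$ and on $s_q=k$ to the single slot-$(k+1)$ factor $\left(1-f_b(k+1)\right)$, which tends to $1$ as $\delta/T\to 0$. Your additional care --- handling the $O(\delta/T)$ event that $q$ was in $A$ at $k$ without transmitting, and flagging the mild cross-source coupling induced by conditioning on $s=k$ (which the paper absorbs into the independence assumption behind \eref{iot_MME_distribution_CDF}) --- goes beyond the paper's argument but does not change the approach.
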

\begin{proof}
The proof of the proposition can be found in Appendix A in the Supplemental Material.
\end{proof}

The above proposition tells us that  $F_\alpha(m|s\mathord{=}k)$ can be derived by analyzing the dynamics of the individual MIoT sources separately, i.e., through the CDF of the time interval between the last transmission by $q$ and the subsequent visit to state $A$ of $q$ itself, which is significantly easier to compute than   using $F_{\alpha_q}(m|s\mathord{=}k)$.

Next, we rewrite $F_{\alpha_q}(m|s_q\mathord{=}k)$  accounting for the time reference of source $q$. To this end, we recall that 
each source belongs to a specific group and it is quasi-synchronized only with the IoT sources belonging to that group, while different groups 
may exhibit a temporal offset with respect  to each other \footnote{Sources belonging to the same group have zero offset relatively to each other.}. 
By taking as global reference the time of source 0,  we denote with $\omega_q \in\{0,...,N\mathord{-}1\}$ the time offset between source $0$ and the $q$-th source ($q=1,\ldots,Q-1$). Then 
\eref{second_property} can be rewritten as: 
\begin{equation}
\label{second_property_offset}
F_\alpha(m|s\mathord{=}k)   =   1\mathord{-}\prod_{q\mathord{=}1}^Q \left ( 1\mathord{-}F_{\alpha_q}(m|s_q(k,\omega_q)) \right ),
\end{equation}
where  $s_q(k,\omega_q)\mathord{=}\mathrm{mod}(k\mathord{+}\omega_q,N)$ and 
\begin{eqnarray}
\label{eq_alarm}
F_{\alpha_q}(m|s_q(k,\omega_q)) & \mathord{=} & \sum_{x=1}^m f_b(\mathrm{mod}(s_q(k,\omega_q) \mathord{+}x,N))  \cdot \nonumber \\
& & \prod_{y\mathord{=}s_q(k,\omega_q)\mathord{+}1}^{s_q(k,\omega_q)\mathord{+}x-1} \kern-1em \left[ 1 \mathord{-} f_b(\mathrm{mod}(y,N))   \right],
\end{eqnarray}
with $f_b(n)$ being the transition probability from $R$ to $A$ given in (\ref{eq_probability}). In (\ref{eq_alarm}), $F_{\alpha_q}(m|s_q(k,\omega_q))$ has been derived considering the probability that the following sequence of  events takes place: no transition into state $A$ for $m\mathord{-}1$ slots, and  a transition into state $A$, exactly $m$ slots after $s_q(k,\omega_q)$. 


We now switch to continuous time and evaluate the system dynamics when the slot duration $\delta$ tends to 0. We recall that the duration of slot $\delta$ is arbitrary and it only affects the number of slots within a period of activity of a group, without affecting the MIoT traffic model. 
Let us denote with $t$ the reference time of MIoT source $0$, with $s_q(t,\omega_q)$ the time instant of MIoT source $q$ in its period corresponding to $t$, i.e., $s_q(t,\omega_q)\mathord{=}\mathrm{mod}(t\mathord{+}\omega_q,T)$, and with $\tau$ the interval from the last transmission in the system to the time of the first transition from $R$ to $A$ by any of the MIoT sources.

First, for $\delta\rightarrow 0$, we  rewrite \eref{second_property_offset} and \eref{eq_alarm}, respectively,  as,
\begin{equation}
\label{second_property_offset_continous}
F_\alpha(\tau|s\mathord{=}t)  =   1\mathord{-}\prod_{q\mathord{=}1}^Q \left ( 1\mathord{-}F_{\alpha_q}(\tau|s_q(t,\omega_q)) \right ) 
\end{equation}
and 
\begin{eqnarray}
\label{eq_alarm_2}
F_{\alpha_q}(\tau|s_q(t,\omega_q)) &\mathord{=} &\int_0^\tau f_b\left(\mathrm{mod}(s_q(t,\omega_q)\mathord{+}x,T)\right)\mathord{\cdot} \nonumber \\
& & {\displaystyle \prod_{y\mathord{=}s_q(t,\omega_q)}^{s_q(t,\omega_q)\mathord{+}x}}\kern-1em \left(1\mathord{-}f_b(\mathrm{mod}(y,T)\right)dx ,
\end{eqnarray}
where $f_b(x)$ can be obtained directly from \eref{eq_probability}  as, 
\begin{equation}
\label{beta_continous}
f_b(x) = \frac{60\left(\frac{x}{T}\right)^2\left(1-\frac{x}{T}\right)^3}{T}
\end{equation}

Looking at \eref{second_property_offset_continous}, one can see that, when the number of MIoT sources in the system grows, the time interval between a packet transmission and the subsequent visit to state $A$ by any MIoT source decreases dramatically, since 
the minimum over a large number of positive random variables should be considered.  Consequently, it is enough to provide an expression for $F_{\alpha_q}(\tau|s_q(t,\omega_q))$ that is accurate for small values of $\tau$; given that, we can assume: $s_q(t,\omega_q)\mathord{+}\tau<T, \forall~s_q(t,\omega_q)\in[0,...,T]$. Then a good approximation of $F_{\alpha_q}(\tau|s_q(t,\omega_q))$ for $Q$ large, hence $\tau$ small, is given by:
\begin{equation}
\label{eq_alarm_alpha_continous}
F_{\alpha_q}(\tau|s_q(t,\omega_q))   = \int_0^\tau f_b(s_q(t,\omega_q)\mathord{+}x){\displaystyle \prod_{y\mathord{=}s_q(t,\omega_q)}^{s_q(t,\omega_q)\mathord{+}x}}\kern-0.5em \left(1\mathord{-}f_b(y)\right)dx \,.
\end{equation}

Interestingly, the product form in \eref{eq_alarm_alpha_continous} is the Volterra's product integral.   
Using such an integral expression in \eref{eq_alarm_alpha_continous}, we obtain: 
\begin{equation}
\label{F-V}
F_{\alpha_q}(\tau|s_q(t,\omega_q)) = \int_0^\tau f_b\left(s_q(t,\omega_q)\mathord{+}x\right) e^{{\displaystyle \int_{s_q(t,\omega_q)}^{s_q(t,\omega_q)\mathord{+}x}}\kern-2.5em\mathord{-}f_b\left(y\right) dy}dx.
\end{equation}

Replacing \eref{beta_continous} in \eref{F-V} and solving both integrals, we get:
\begin{eqnarray}
\label{eq_alarm_6}
F_{\alpha_q}(\tau|s_q(t,\omega_q)) & \mathord{=} & 1-e^{ \mathord{-}\frac{\tau}{T}  
\left( 60\left(\frac{s_q(t,\omega_q)}{T} \right)^2 \left( 1 \mathord{-} \frac{s_q(t,\omega_q)} {T} \right)^3 \right) }  \nonumber\\
&&\cdot \, e^{o(\tau^2) }  \nonumber\\
&{\stackrel{(a)}{\approx}}& 1- e^{- f_b(s_q(t,\omega_q))\tau} \,,
\end{eqnarray}
where $(a)$ holds for $\tau$ small.
As a result, for $Q$ large, $F_{\alpha_q}(\tau|s_q(t,\omega_q))$ follows an exponential distribution with rate parameter $f_b(s_q(t,\omega_q))$. Substituting \eref{eq_alarm_6} in \eref{second_property_offset_continous}, we obtain:
\begin{eqnarray}
\label{fbeta_theorem}
F_\alpha(\tau|s\mathord{=}t)  &\approx & 1 \mathord{-} \exp \left (- \sum_{q\mathord{=}1}^Q f_b(s_q(t,\omega_q)) \tau \right) \nonumber \\
&=& 1\mathord{-} e^{-\lambda_{\alpha|t} \tau} \,, 
\end{eqnarray}
which states that,  when $Q$ grows large, $F_\alpha(\tau|s\mathord{=}t)$ follows an exponential distribution  with rate $\lambda_{\alpha|t} = \sum_{q\mathord{=}1}^Q f_b(s_q(t,\omega_q))$. 

Interestingly, under the above conditions, 
we can write:
\begin{eqnarray}
\label{eq_proof}
\lambda_{\alpha|t} &\approx & Q \int_0^T \PP(s_q(t,\omega_q)) f_b(s_q(t,\omega_q))d\omega_q \nonumber \\
&= & Q \int_0^T \frac{f_b(s_q(t,\omega_q))}{T} d\omega_q \nonumber \\
&=& \frac{Q}{T} \nonumber \\
& \triangleq & \lambda_\alpha \,,
\end{eqnarray} 
where we considered that: 
\begin{itemize}
\item  exploiting the law of large numbers, the experienced $\lambda_{\alpha|t}$ is approximated accurately by its average value; 
\item  for high values of $Q$, also the number of IoT groups served by the MME grows large, hence the offsets $\omega_q$ can be assumed to be random variables uniformly distributed in $[0,T]$; 
\item such an observation holds also for $s_q(t,\omega_q)$, $\forall t$, since, by definition, $s_q(t,\omega_q)=\mathrm{mod}(t\mathord{+}\omega_q,T)$; 
\end{itemize}
Note that \eref{eq_proof}  not only  states that $F_\alpha(\tau|s\mathord{=}t)$  follows an exponential distribution, but also that such a distribution does not depend on $t$, i.e., $F_\alpha(\tau|s\mathord{=}t)\mathord{=}F_\alpha(\tau)$.

We now use this result to compute the CDF of the inter-arrival time between subsequent bearer instantiation requests at the MME, i.e., $F_\beta(\tau)$. We account for the fact that not all transitions to state $A$ by an MIoT source lead to a packet transmission: after a transition in state $A$ by an MIoT source, the probability of transmitting at least a packet is equal to $1\mathord{-}e^{-1}$. 
Thus, we  compute $F_\beta(\tau)$ considering that two subsequent transmissions in the system are separated by $z-1$ transitions to state $A$ without any transmission. Given the fact that the time for a transition to state $A$  is well described by an exponential distribution ($F_\alpha(\tau)$), we compute $F_\beta(\tau)$ as a sequence of $z$ i.i.d. exponentially distributed time intervals, i.e., an Erlang$(z,\lambda_\alpha)$ distribution, weighted by the probability that two subsequent  transmissions in the system are separated exactly by $z$ transitions to state $A$. Denoting the Erlang$(z, \lambda_\alpha)$ distribution with $E(z,\lambda_\alpha)$, we write: 
\begin{equation}
\label{eq_erlang}
F_\beta(\tau)   = \sum_{z=1}^{\infty} F_{E(z,\lambda_\alpha)}(\tau)(1-e^{-1})(e^{-1})^{z-1}.
\end{equation}

In (\ref{eq_erlang}), we remark once again that the probability of transmitting at least one packet in state $A$ is an input data to the model, and the specific value  in \cite{{laner2013traffic}} can be substituted with any arbitrary value  (even $1$, assuming that packets are always sent successfully upon visiting state $A$). Using the above results, we can prove the theorem below.
\begin{theorem}
\label{th_exponential_conditioned}
When the  IoT group offsets are independent of each other and  $Q$ grows large, $F_\beta(\tau)$ is given by:  
\begin{equation}
F_\beta(\tau)\mathord{=}1\mathord{-} e^{-\lambda_\beta \tau}
\end{equation}
with rate parameter $\lambda_\beta=\frac{Q(1-e^{-1})}{T}$.
\end{theorem}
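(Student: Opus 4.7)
The plan is to take the Erlang--mixture identity \eref{eq_erlang} as the starting point and show that a geometric mixture of $\mathrm{Erlang}(z,\lambda_\alpha)$ distributions collapses to a single exponential with rate $(1-e^{-1})\lambda_\alpha$. Conceptually, \eref{fbeta_theorem} and \eref{eq_proof} have already turned the stream of transitions into state $A$ into an asymptotic Poisson process of rate $\lambda_\alpha=Q/T$, and each such transition is independently ``kept'' -- i.e., actually produces a bearer request -- with probability $1-e^{-1}$. The theorem is then a Poisson thinning statement: the thinned stream is Poisson with rate $(1-e^{-1})\lambda_\alpha=Q(1-e^{-1})/T$, and its inter-arrival time $\beta$ is therefore exponentially distributed with the claimed parameter.

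I would make this formal by differentiating \eref{eq_erlang} and substituting the $\mathrm{Erlang}(z,\lambda_\alpha)$ density to obtain
\[
f_\beta(\tau)=(1-e^{-1})\lambda_\alpha e^{-\lambda_\alpha\tau}\sum_{z=1}^{\infty}\frac{(\lambda_\alpha e^{-1}\tau)^{z-1}}{(z-1)!}=\lambda_\beta\, e^{-\lambda_\beta\tau},
\]
where the inner sum is the Taylor expansion of $e^{\lambda_\alpha e^{-1}\tau}$ and $\lambda_\beta=(1-e^{-1})\lambda_\alpha$; integrating then recovers the claimed CDF. A useful cross-check I would include is to repeat the same reduction on the Laplace-transform side: the transform $(\lambda_\alpha/(\lambda_\alpha+s))^{z}$ of $\mathrm{Erlang}(z,\lambda_\alpha)$ feeds a geometric series that sums to $\lambda_\beta/(s+\lambda_\beta)$, which is exactly the transform of an exponential variate with rate $\lambda_\beta$.

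The heavy lifting has in fact already been absorbed into \pref{lemma_alpha} and \eref{eq_proof}, namely the conditional memorylessness of $\alpha$ and the concentration of $\lambda_{\alpha|t}$ around $Q/T$ uniformly in $t$. The one remaining subtlety that I would flag explicitly is the independence that licenses the geometric compounding: the Bernoulli$(1-e^{-1})$ indicator of ``at least one packet is transmitted during an $A$-visit'' must be i.i.d.\ across sources and visits and independent of the $\alpha$-sequence, which follows from the independent Poisson$(\lambda_A=1)$ transmission assumption per source made by the 3GPP traffic model of \sref{subsec:iot_traffic_model}. Once that is stated, the rest of the argument is a one-line series identity.
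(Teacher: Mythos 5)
Your proposal is correct and follows essentially the same route as the paper: both start from the geometric Erlang mixture \eref{eq_erlang} and collapse it via the exponential series to obtain rate $\lambda_\beta=(1-e^{-1})\lambda_\alpha=Q(1-e^{-1})/T$. The only difference is cosmetic—the paper performs the double-sum interchange directly on the CDF, while you work with the density (or equivalently the Laplace transform), which is a slightly cleaner way to reach the same one-line series identity.
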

\begin{proof}
The proof of the theorem can be found  in Appendix B in the Supplemental Material.
\end{proof}

The above result states that the inter-arrival time between bearer establishment requests at the MME follows an exponential distribution, which implies that {\em the number of requests that the MME, hence the EPC,  receives in a time interval follows a Poisson distribution}. 
This is  a key result that allows us to characterize first  the control overhead due to bearer establishment and forwarding, and then  the delay performance of the EPC.   
Note that the above result holds also in more general scenarios where there are aggregators relaying the data packets generated by the MIoT sources  (and requesting for  bearer instantiations) to the MME.

%

\section{EPC Model and Analysis} 
\label{sec:epc-model}

In this section, 
we begin by showing the results of our experimental study, which highlight the following facts: (i)  the  bearer establishment takes a deterministic amount of processing, (ii) the variation in the delay of EPC entities other than the MME is  negligible, (iii)  a PS  well mimics the MME serving policy. To perform our validation, we run and profile the  components of a real-world EPC implementation called OpenAirInterface (OAI) \cite{OAI}, as described in \sref{subsec:OAI_EPC}. Then, based on the above key observations, we analytically  characterize the EPC control overhead and, using a M/D/1-PS model, we derive an expression for the delay experienced by the MIoT traffic  within the EPC.

\subsection{Understanding the EPC through the OpenAirInterface  implementation}
\label{subsec:OAI_EPC}


The OAI EPC is an implementation of the cellular core network where the MME and the HSS  are implemented as separate entities, while the S-GW and the P-GW as a single unit (called SPGW). To investigate the interaction between the EPC and the IoT sources, we connected the OAI EPC to a software simulator of the Radio Access Network (RAN), called Open Air Interface Simulator (OAISIM). Herein, UEs and eNBs communicate with the OAI EPC through an Ethernet cable, sending and receiving control messages as if a real RAN was in place. 

The use of OAISIM implicitly creates some limitations to our experimental results, the most important one the fact that OAISIM supports a maximum of 3 simulated UEs in our setting. Nevertheless, we use OAI EPC and OAISIM for our study because it is an open-source controlled environment where the behavior of the EPC and the UEs can be controlled at the millisecond time-scale. Also, importantly, OAI EPC is compliant with Release 10  functionalities, and off-the-shelf smartphones can connect to the OAI EPC. 
Finally, we mention that, even if OAI EPC implements the standard bearer establishment procedure, which includes a superset of the messages exchanged between the EPC entities during the CIoT bearer instantiation procedure, below we report the results considering only the messages included in the CIoT procedure, as depicted in \fref{diagram}.

The total number of CPU  operations for each EPC entity, obtained by profiling the OAI EPC with the Callgrind tool from the Valgrind  suite \cite{Valgrind}, is depicted in \fref{experiments_results}. Therein, the number of users attached to the EPC varies from 1 to 3, and each data point has been obtained using $20$ runs, resulting in a 95\% confidence interval of up to $\pm0.7$\% of the plotted percentile values. Note also that, in each run, every UE performs one bearer establishment.  

\begin{figure}
\centering
	\includegraphics[width=1\columnwidth]{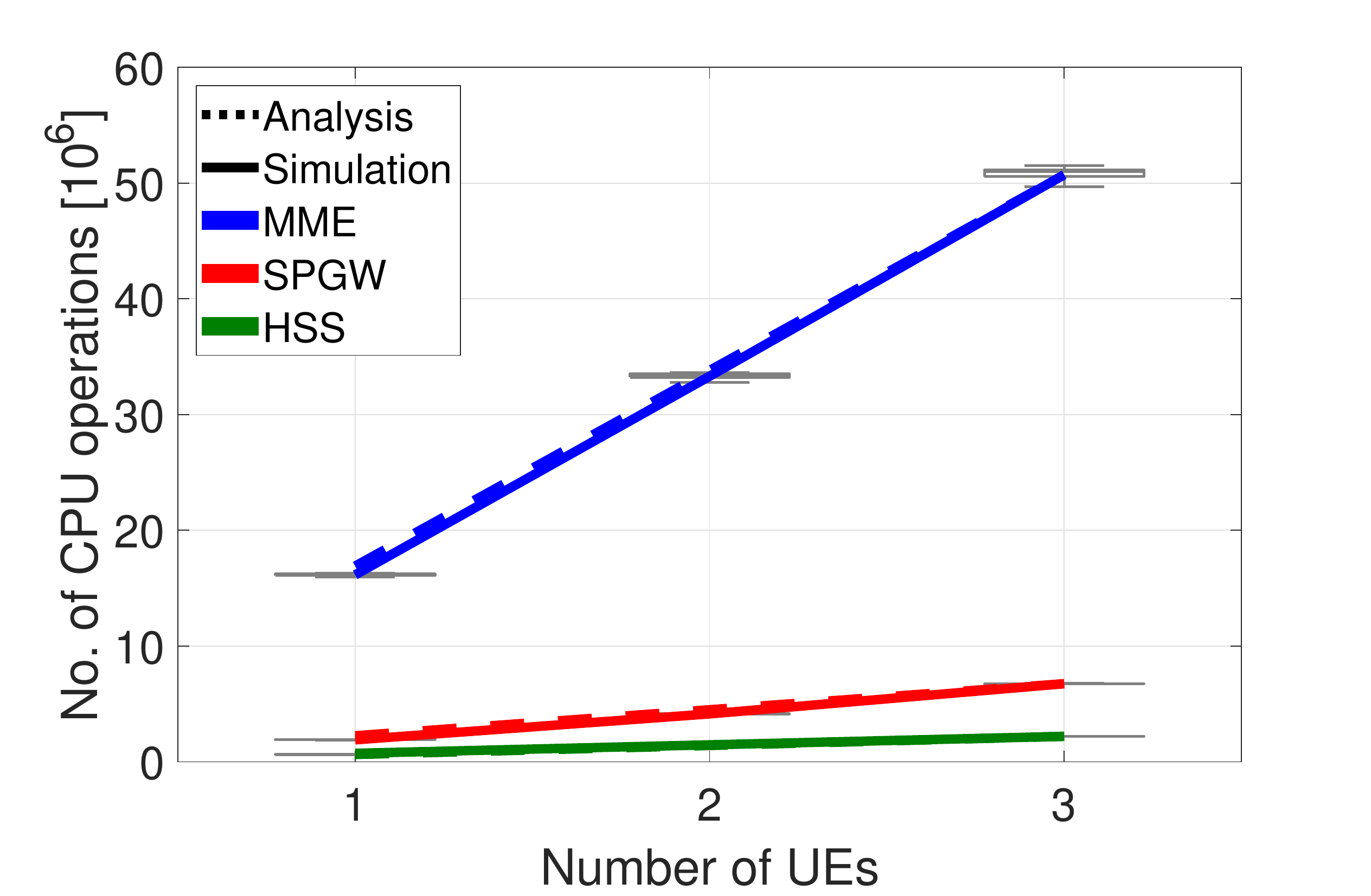}
\caption{Number of CPU  operations required by bearer establishment vs. number of UEs. }
\label{experiments_results}
\end{figure}

\fref{experiments_results} demonstrates that  the job size associated with a bearer establishment procedure is fixed and deterministic. This is shown by two facts: (1) the number of  operations required for a bearer establishment procedure grows linearly with the number of bearer instantiation requests, and (2)  the variation of the number of CPU operations required by the EPC entities across different runs is negligible. The former is further highlighted in the plot by the excellent match between the solid line, showing the experimental values, and 
the dotted line, which represents a linear fit whose slope is forced to the average number of CPU operations required by a single bearer instantiation.  
The latter fact, instead, can be  observed from the boxplots in \fref{experiments_results}, representing the 10-th and 90-th percentile of the CPU operations distribution: the variance is very small in all analyzed cases and for any of the EPC entities in the system.

The second important observation we can make by looking at \fref{experiments_results} is that the MME is the dominant component of the performance for the EPC: the number of operations required by any other entity is at most the 13\% of those needed by the MME.  
Given the fact that typical EPC implementations include entities with similar computational capability \cite{hasegawa2015joint}, as the  traffic load changes, it is fair to neglect the variations in the delay introduced by entities of the EPC other than the MME. 

We also performed dedicated experiments to grasp some insights on the policy used by the OAI EPC implementation to serve packets that are simultaneously queued at the different entities. In this set of experiments, in order to avoid interference, we isolate MME, SPGW, and HSS, assigning to each entity a dedicated CPU core of the PC acting as EPC. Note that all the UEs emulated through OAISIM make a bearer request almost simultaneously and it is not possible to determine beforehand their time of attachment. Therefore, users contend for the same resources during nearly the whole duration of the bearer establishment. 
\begin{table}[tb]
\caption{Bearer Establishment Time. 1 UE vs. 2 UEs \label{tab:bearer_establishment} }
\centering
\begin{tabular}{|| c | c | c ||} 
 \hline
 1 UE Bearer Time & 2 UEs Bearer Time &  2 UEs Bearer Time \\
 	 Average      & $1^{st}$ User - Avg.  &  $1^{st}$ User - PS Avg. \\
 \hline
 $0.84 \pm 0.02$ s & $1.63 \pm 0.03$ s & $0.87 \pm 0.02$ s \\ 
 \hline
\end{tabular}
\end{table}
The first and second column of Table~\ref{tab:bearer_establishment} show the average  (over 20 runs) and the standard deviation of the time elapsing between the first and the last packet processed by the MME with one UE and two UEs, respectively. In the case of two UEs, we only consider the data relative to the bearer establishment of the first UE. In the third column, we demonstrate that it is fair to assume that a PS policy is in place. Indeed, considering the time in the second column, and halving the time in which the procedures of the two UEs overlap, we obtain a value that is very close to the one in the first column.

Finally, we argue that our experimental results, although obtained for 3 users, have general validity. Indeed, 
\begin{itemize}
\item the MME uses a PS policy to serve the incoming traffic and the number of served users does not have any impact on the service policy of the system. Note that such an observation is consistent with the fact the PS policy closely emulates the behaviour of a multi-threaded application running on a virtual machine  instantiated on commodity hardware;
\item the number of CPU operations required by a bearer establishment procedure using CIoT optimization is deterministic at any entity (as also described in, e.g., \cite{prados2017modeling}), and does not depend on the number of on-going procedures;
\item the MME is the computational bottleneck of the EPC, which
is also evident given the load and the capacity values assigned to the EPC entities \cite{hasegawa2015joint}.
\end{itemize} 
It is therefore fair to consider that the assumptions we make, based on our experimental findings, still hold as the number of IoTs grows.

\subsection{Control Overhead and EPC Delay Characterization}\label{sec:delay_iot_traffic}

As discussed above, the bearer establishment procedure in \fref{diagram} requires a deterministic number of CPU operations. Then, at every entity $X$ involved in the procedure, each bearer instantiation is characterized by a fixed number of CPU operations $O_{X}$, which is the sum of the CPU operations required by the messages in \fref{diagram}. It follows that the mean number of CPU operations per second that entity $X$ has to perform is given by:  
$\EE[\mu_X] = \lambda_\beta O_{X}$.

Next, we  derive the pdf, $f_d(\tau)$, of the  interval between a bearer request and its completion, i.e., the time passing from the first to the last message in \fref{diagram}.  To this end, we exploit the fact that the inter-arrival time of bearer requests at the MME follows an exponential distribution, as well as the observations set out below, which have been derived through the experimental measurements. 
\begin{itemize}[leftmargin=0.8cm]
\item[{\em (a)}] The MME is the main bottleneck of the  control plane. As shown experimentally in \sref{subsec:OAI_EPC}, the computational load requested to the MME for a single bearer implementation is roughly one order of magnitude larger than the computational load requested to any other entity. This implies that the CPU utilization of entities other than the MME is very low and variations of the control message processing times can be neglected, i.e., they can be considered as constant. 

\item[{\em (b)}]  As shown above, the MME serving policy can be modeled through a PS discipline. 

\item[{\em (c)}]  It is fair to assume that the duration of a specific bearer instantiation procedure is very short compared to the timescale at which the MME load varies. When the utilization of the MME is high, and making scaling decisions is critical, the number of competing messages at the MME is high as well. Thus, during the short time-scale of a bearer instantiation (in the order of milliseconds), the difference between the number of incoming and outgoing messages at the MME is negligible if compared to the number of queued messages. It follows that the fraction of capacity assigned, according to the PS policy, to an MIoT bearer request does not vary throughout a bearer instantiation procedure and the processing time of each message belonging to the same bearer instantiation is roughly the same (as in an M/D/1-PS queue). Thus, each bearer request can be considered as a single job, even if composed of multiple subsequent messages, with a computational load equal to $O_{MME}$. 
\end{itemize}

Given the above observations and the result in \thref{th_exponential_conditioned}, we model the MME as an M/D/1-PS queue, where the deterministic service time depends on the capability of the MME, while the rate of arrivals of the bearer instantiation requests  is equal to $\lambda_\beta$, as reported in \thref{th_exponential_conditioned}.
Then $f_d(\tau)$ can be written as,
\begin{equation}
\label{delay}
f_d(\tau) = f_v(\tau) + K,
\end{equation}
where:
\begin{itemize} 
\item $f_v(\tau)$ is the pdf of the time spent by a bearer instantiation at the MME, i.e., the sojourn time of a job in the M/D/1-PS queue;
\item $K$ is the constant delay due to entities other than the MME (see our  observation {\em (a)} above), which can be  computed as:
\begin{equation}
\label{eq:K}
K = \frac{O_{UE}}{C_{UE}}+\frac{O_{eNB}}{C_{eNB}}+\frac{O_{HSS}}{C_{HSS}}+\frac{O_{S-GW}}{C_{S-GW}}+\frac{O_{P-GW}}{C_{P-GW}},
\end{equation}
 where $O_{X}$ is the total number of CPU operations that entity $X$ has to perform for each  bearer establishment, while $C_{X}$ is the computational capability of entity $X$, expressed in CPU operations per second. 
\end{itemize} 

To derive $f_v(t)$, we leverage  the results in \cite{egorova2006sojourn}, which, owing to the complexity of  computing such a distribution, provides the following approximation for the CDF: 
\begin{equation}
F_v(\tau) \approx \psi e^{-\gamma \tau} \,.
\label{MME_delay}
\end{equation}
In the above equation, $\psi$ is given by \cite{egorova2006sojourn}:
\begin{equation}
\psi = \frac{(1-\rho)(\lambda_\beta-\gamma)}{2\lambda_\beta (1-\rho)-\gamma\rho(2-\rho)}\,,
\end{equation}
where $\rho=\lambda_\beta D$ is the control traffic load at the MME, with $D=\frac{O_{MME}}{C_{MME}}$ being the deterministic service time of the bearer instantiation at the MME,  
and $\gamma$ is the only positive solution of  \cite[Eq.\,(3.2)]{egorova2006sojourn}. 

We remark that, given the pdf of the  time interval between a bearer instantiation request and its completion (i.e., $f_d(\tau)$), we can compute the pdf of the  delay that the control plane introduces in handling data packet forwarding at the MME when the CIoT optimization is supported. The derivation of the latter pdf implies considering only the messages in \fref{diagram} that are exchanged till the data packet transmission is completed. Then, based on our earlier observation {\em (c)} and given the number of CPU operations required by each message, 
we can obtain the pdf of the MIoT traffic latency by properly scaling $f_d(\tau)$. 

\begin{figure}
\centering
	\includegraphics[width=1\columnwidth]{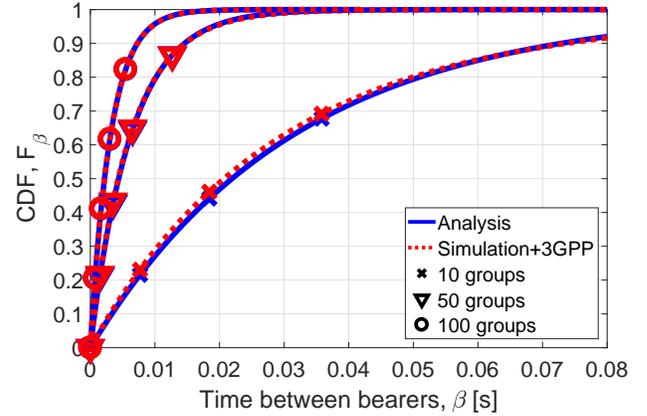}
\caption{Inter-arrival time distribution of bearer requests: analysis vs. simulation using the 3GPP traffic model.}
\label{arrival_validation}
\end{figure}

\begin{figure}
\centering
	\includegraphics[width=1\columnwidth]{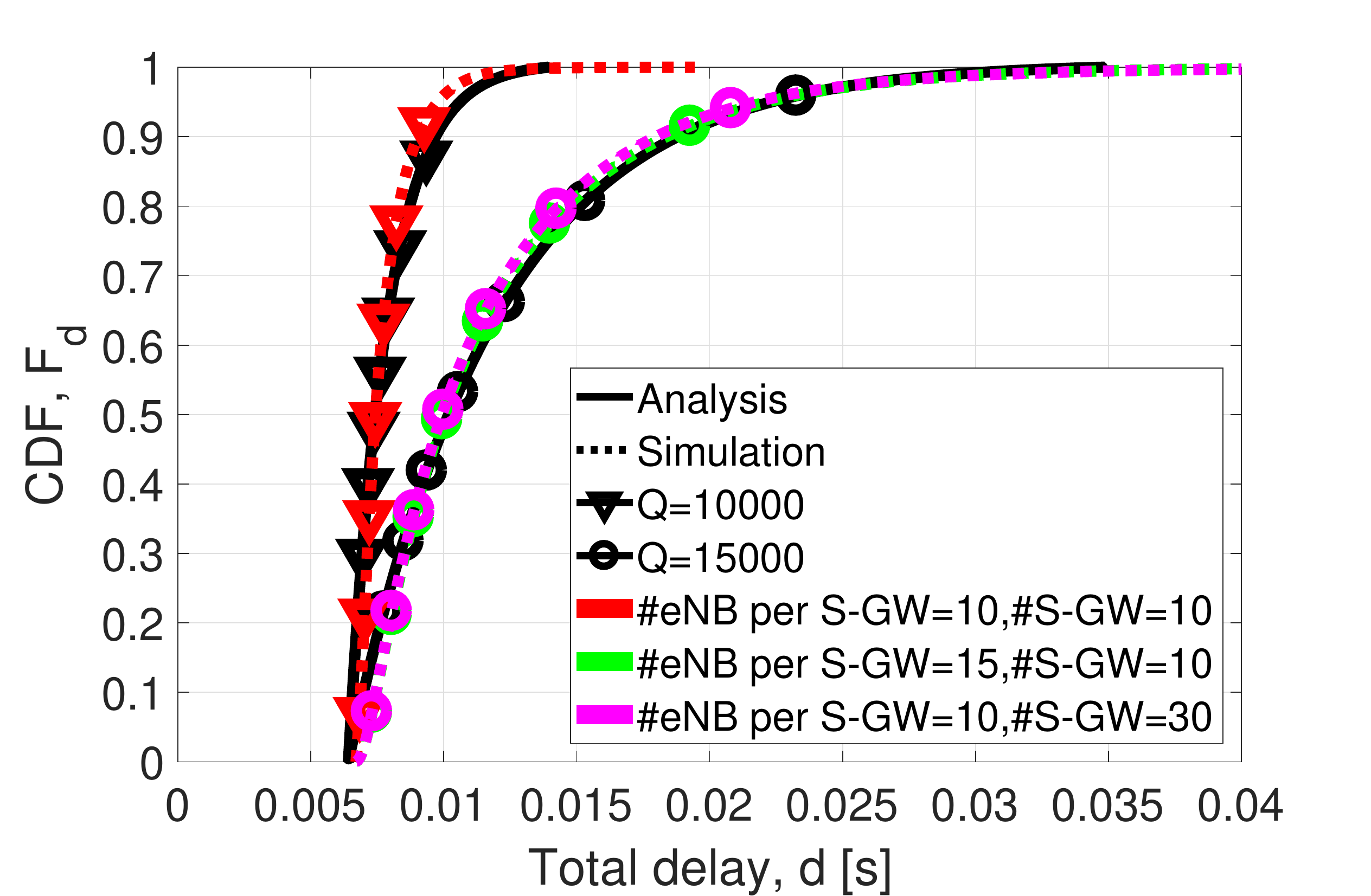}
\caption{Delay distribution: analytical  vs. simulation results, using the 3GPP traffic model.}
\label{delay_validation}
\end{figure}

\section{Model Validation and Exploitation}
\label{sec:results}

In the following, we show how the behaviors -- inter-arrival times and bearer instantiation delays -- predicted by our analysis match those yielded by extensive simulations, using both synthetic traffic models \cite{3gppepc}  (\sref{subsubsec:valexp_synth}) and real-world mobility traces (\sref{subsubsec:valexp_trace}). Furthermore, we demonstrate how our model can be leveraged in the dimensioning and management of vEPC networks handling MIoT traffic.


\subsection{3GPP synthetic traffic}
\label{subsubsec:valexp_synth}
We  developed a Matlab simulator that accurately implements the 3GPP traffic model described in \sref{subsec:iot_traffic_model}.
The parameters we used are as follows: $T=10$\,s (as specified in \cite{3gppepc}), $\delta = 10$\,$\mu s$, and group size equal to $50$ MIoT sources. 

{\bf Model validation.} Here we  validate the approximations introduced in our analysis as well as our main result in \sref{sec:input_MME} (i.e., the  inter-arrival time of bearer requests is exponentially distributed). To compare $F_\beta(\tau)$, computed as in \thref{th_exponential_conditioned}, to the CDF of the inter-arrival time of bearer requests at the MME in simulation, we performed  extensive  experiments, varying the number of groups in the scenario and the offsets between them $\omega_q$. In \fref{arrival_validation}, we present the results obtained with a specific set of offsets, as the number of groups served by the MME varies; however,  similar results have been also obtained changing the $\omega_q$~values.


With as few as $10$ groups served by the MME, \fref{arrival_validation} highlights that simulation and analytical results closely match, thus showing that the exponential  $F_\beta(\tau)$ captures very well the behavior of the 3GPP  traffic model presented in \sref{subsec:iot_traffic_model}. Furthermore, as expected, the match between the two curves improves as  the number of groups served by the MME grows. 

We now validate our delay model presented in \sref{sec:delay_iot_traffic}. We first remark that, for the analytical derivation of $f_d(\tau)$, we neglected the load due to the integrity check and decryption, at the MME. Indeed, while a single control message requires (roughly) one million floating-point operations \cite{prados2017modeling}, studies on commodity processors show that nowadays a $50$-byte packet (as in the case of IoT applications) requires few hundreds of floating-point operations for encryption/decryption \cite{EncryptionPerformance}.  In our simulations, instead, we account for data encryption/decryption as well as integrity check at the MME. 
Second, to compute the constant delay component of the delay distribution, $K$, in (\ref{eq:K}), we proceed as follows:  
\begin{itemize}
\item we obtained the number of CPU  operations, $O_X$, required at the  EPC entities by a bearer establishment through our experimental measurements described in \sref{subsec:OAI_EPC}, and 
\item we leveraged the work in \cite{hasegawa2015joint}, which provides the computational capability of the EPC entities, $C_X$, based on real-world data from a large mobile network operator.
\end{itemize}

Finally, in order to validate the analytical expression of $f_d(\tau)$, we extended our Matlab simulator to perform the whole procedure in \fref{diagram}, starting from the S1-AP Initial Message sent by the eNB. In our setup, all MIoT sources belonging to the same group, each containing $50$ MIoT sources, are attached to the same eNB. Several eNBs may be attached to the same S-GW, while all S-GWs are attached to the same P-GW. 
Except for the  RRC connection closing message sent by the eNB to the UE, all messages belonging to the same bearer instantiation travel sequentially between the involved entities, as foreseen by the CIoT optimization.   
Each entity is implemented as a PS server whose service rate matches the processing  capability provided in \cite{hasegawa2015joint}. 


\fref{delay_validation} shows the analytical and experimental $F_d(\tau)$ in different scenarios. Specifically, we present the results of the CIoT optimization for two different values of traffic load,  i.e., with $Q=10,000$ and $Q=15,000$. In the latter case, we also study two different configurations of the EPC to check whether changing the number of eNBs/S-GWs in the system has an impact on $F_d(\tau)$ or not. 

First, we observe that the CDF of the bearer instantiation delay computed  through (\ref{delay})-(\ref{MME_delay}) closely matches the experimental delay obtained via simulation -- a fact that is especially evident looking at the tail of the CDFs. This result proves that considering the whole bearer establishment handshake as a single job at the MME, plus a constant delay due to the other entities, is a valid approximation. Small differences between the analytical and experimental CDFs for low values of delay, are mainly due to the model in \cite{egorova2006sojourn}, used to approximate the sojourn time in an M/D/1-PS queue. Indeed, due to the complexity of the M/D/1-PS characterization, \cite{egorova2006sojourn} explicitly aims at modeling with higher accuracy the tail of the sojourn time CDF,  which is what most matters in delay sensitive applications. Second, for $Q=15,000$ the simulation results highlight that the two configurations with a different number of S-GWs provide exactly the same delay CDF, which validates our finding:  $F_d(\tau)$  depends only on the number of MIoT sources in the scenario, and it is not affected by variations in the number of eNBs and S-GWs. This confirms that the MME delay contribution dominates that of the other  EPC entities. 

{\bf Model exploitation.} 
We now show how our model can be used to develop efficient scaling algorithms for EPC networks serving MIoT traffic.
Let us consider the following case, reflecting, e.g., a smart factory or cloud robotics application \cite{C-robostics}, where  the delay introduced by the EPC should be less than $0.1$\,s with 0.99 probability. Since the delay performance depends on the number of MIoT sources served by the EPC and on the capability of the EPC entities, we need an algorithm  that, given the IoT traffic, scales the capability of the EPC entities according to the number of MIoT sources in the system.  Such an algorithm can leverage the analytical expression of $F_d(\tau)$ we obtained. 

As an example,  
we considered  a simple threshold-based algorithm, which, as the number of active IoT sources grows,  increases the computation capability of the EPC entities by $100\%$, and then by $150\%$, with respect to the initial value, depending on the MME delay predicted by our model (note that increasing the EPC capability by 100\%  can be realized by creating a new instance of  its components). As shown in \fref{delay_exploitation}, such an algorithm meets the target performance. The figure also reports the delay corresponding to the cases when the capability values $C_X$ are fixed to the initial value provided in \cite{hasegawa2015joint}, and to such a value increased by 100\%  or  by 150\%. 
Although more advanced scaling algorithms may be designed, we remark that, thanks to our model, even a simple threshold-based algorithm is able to meet the target delays and that our analysis, coupled with off-the-shelf virtualization tools like OpenStack, can be a key enabler to the support of IoT applications with  delay guarantees.
\begin{figure}
\centering
    \includegraphics[width=1\columnwidth]{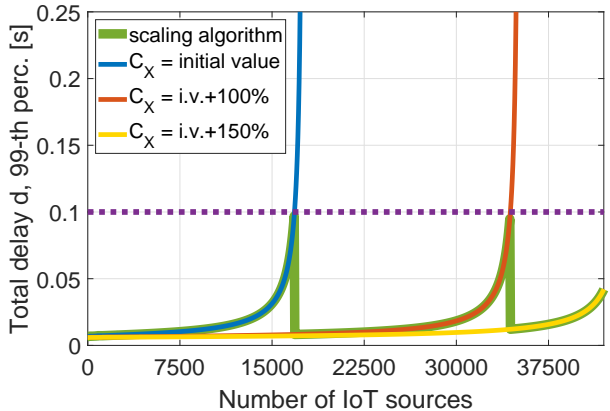}
\caption{Analysis exploitation:  $99-$th percentile of the bearer instantiation delay vs. number of MIoT sources. Performance obtained with:  the initial computational capability ($C_X$) of the EPC entities  set to the value provided in  \cite{hasegawa2015joint} (blue line), capability increased by 100\% (red line),  capability increased by 150\% (yellow line), and capability determined through the scaling algorithm (green line). Dashed, purple line: target value of the $99-$th percentile.}
\label{delay_exploitation}
\end{figure}

\subsection{Real-world trace}
\label{subsubsec:valexp_trace}
We now consider a real-world setting and leverage a large-scale mobility trace generated accounting for the MoST scenario~\cite{most}. The MoST scenario is a highly detailed representation of the mobility in the Monte Carlo urban area, including: (i)  a multi-layered road topology, with tunnel and bridges; (ii) multi-modal mobility, e.g., users driving to a parking lot and riding public transportation thence; (iii)  multiple types of coexisting  users, e.g., commuters and tourists. The scenario models the mobility of a total of $27,967$ users throughout an 8-hour period from 5~AM to 1~PM,
and includes a total of $607$ tagged points of interest (POIs) such as offices, restaurants, and tourist attractions. We assume that  every time a user visits or stops at one of the POIs,  a sensor, e.g., an identity-recognizing device,  is triggered, resulting in a packet transmission, hence,  a bearer instantiation request towards the MME serving the  area.

{\bf Model validation.}  Our first objective is to establish whether the inter-arrival time between bearer requests obtained experimentally matches the exponential distribution $F_\beta(\tau)$  obtained  through our analysis. 
To this end, we divided the time into 8 periods of one hour each, and computed the empirical distribution of the inter-arrival time of bearer instantiation requests in the MoST trace in every time period.  
The average arrival rates of bearer requests in the the various periods are very different, reflecting the daily fluctuations in  mobility.  Nevertheless, as exemplified in \fref{fig-KS}, the match between the analytical and the empirical distribution is excellent for all the time periods, proving that the  inter-arrival time of the bearer requests obtained from the MoST trace follows an exponential distribution as well. This confirms that our analysis holds also for applications that do not follow explicitly  the 3GPP traffic model described in \sref{subsec:iot_traffic_model}. 

\begin{figure}
\centering
	\includegraphics[width=1\columnwidth]{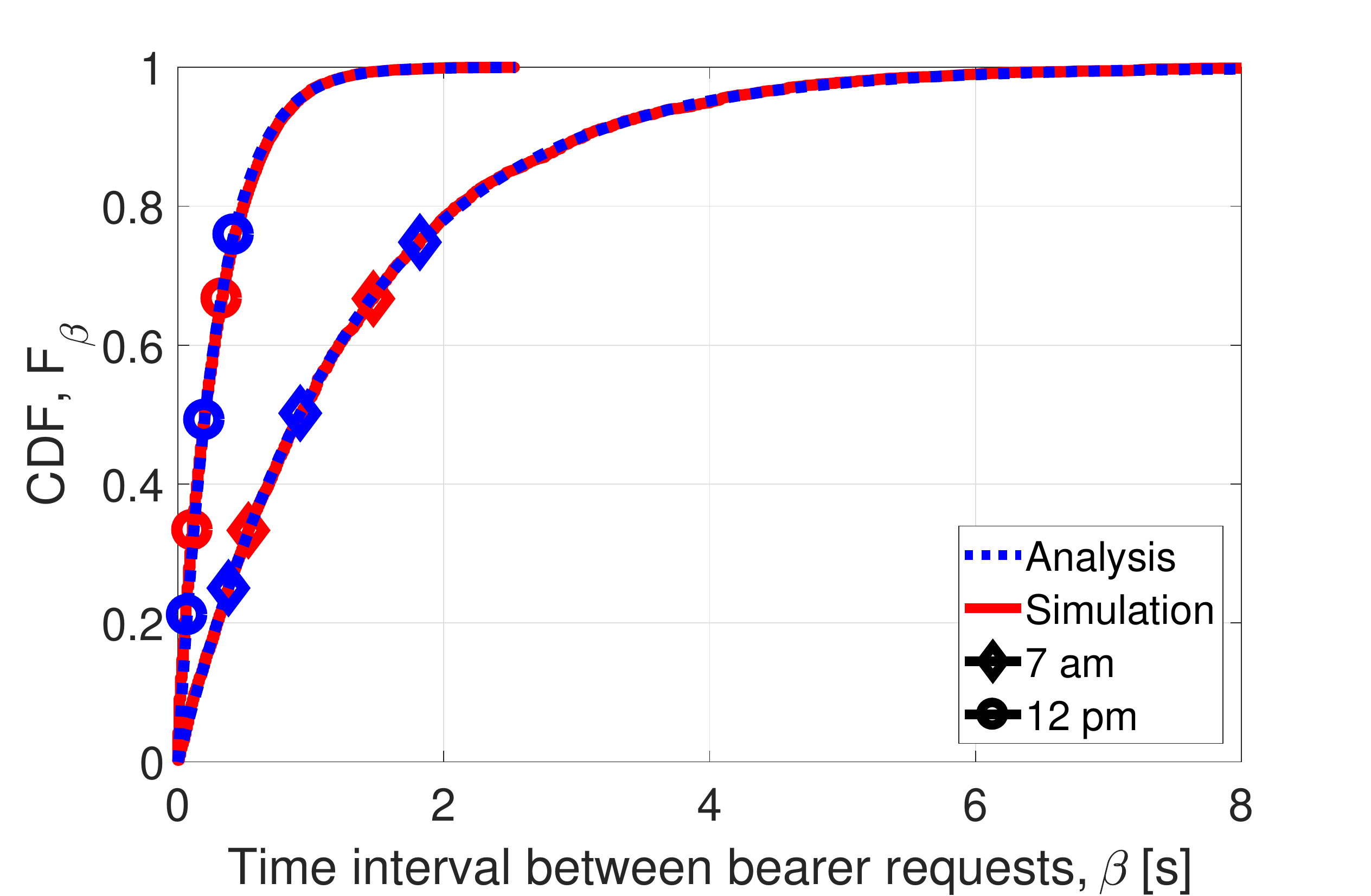}
\caption{MIoT trace: Comparison between analytical and  empirical distribution of  the inter-arrival time of bearer  requests at the MME, for two representative time periods.}
\label{fig-KS}
\end{figure}


Then we used the bearer requests  obtained from the MoST trace to evaluate if the delay distribution of the bearer request procedures can be approximated with $f_d(\tau)$ (as in \eref{delay}) also in this realistic IoT scenario. We fed to the previously mentioned Matlab simulator  the time instants of the bearer requests by sensors in the MoST trace and, since the number of bearer requests is rather small even in the rush hour, we reduced the EPC entities capability of one order of magnitude. The analytical and simulation results for the rush hour are compared in \fref{delay_IoT_app}, where the bearer establishment procedure delay is normalized by  $K$ (with $K$ given in \eref{eq:K}). 
Given the fact that the largest difference between the two CDFs (which happens for  low values of delay) is very small, we can conclude that our analysis well approximates  the behavior of the EPC when serving MIoT sources, also in the case of a realistic scenario  as the one of the MoST trace. 

{\bf Model exploitation.} We now present how our analytical results can be exploited under the MoST scenario. \fref{delay_exploitation2} shows the time evolution of the $99-$th percentile of the bearer instantiation delay, for different values of capability $C_X$. Considering a target performance of $0.1$\,s, we observe that the simple scaling algorithm employed when deriving \fref{delay_exploitation} (and which exploits our analytical results) successfully meets the delay requirement even under a sudden and very significant surge in the bearer request rate (see the black line in \fref{delay_exploitation2}).  

\begin{figure}
\centering
	\includegraphics[width=1\columnwidth]{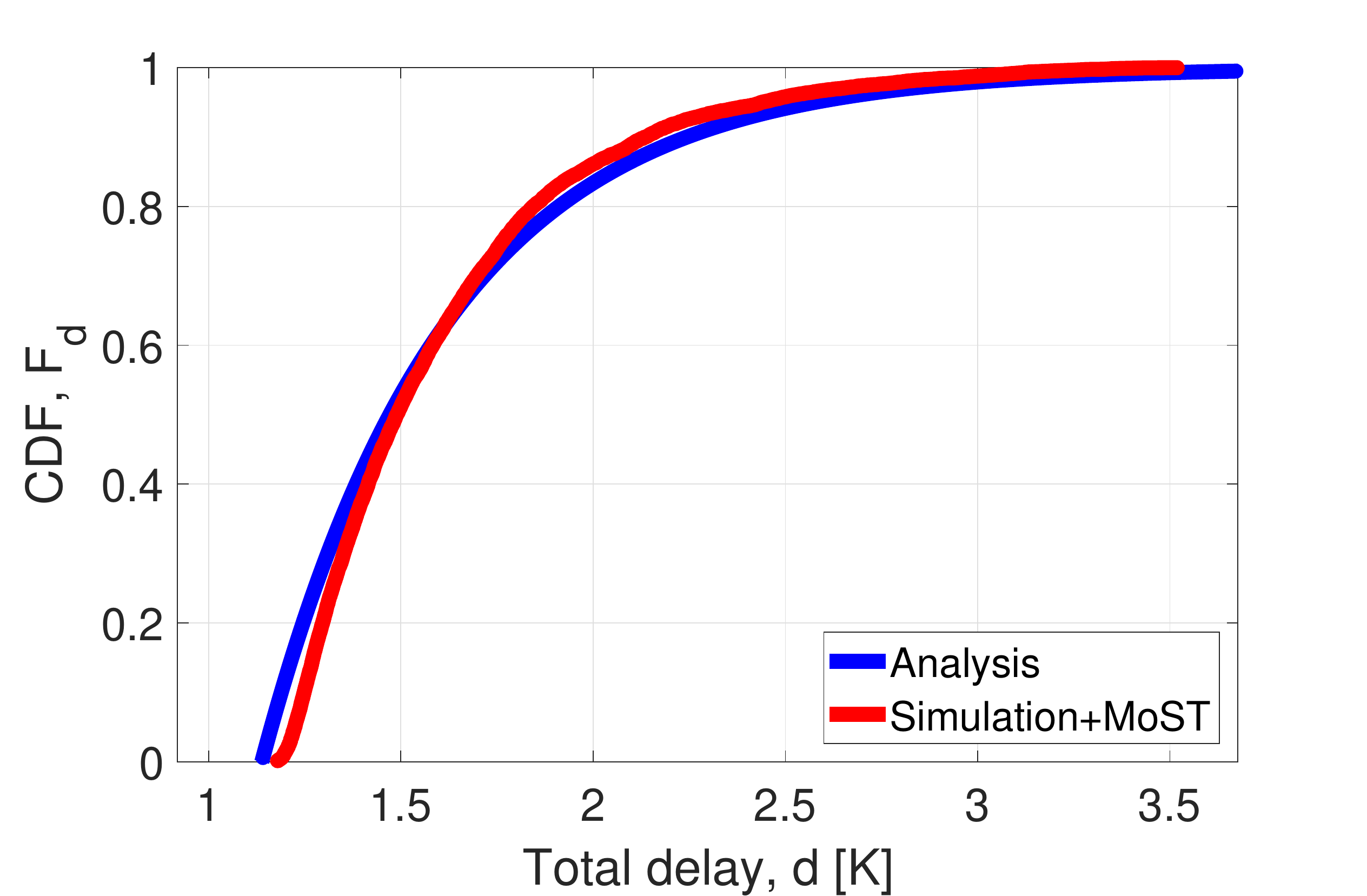}
\caption{Delay distribution of sensor traffic: Analysis vs. simulation, using the MoST trace.}
\label{delay_IoT_app}
\end{figure}

\begin{figure}
\centering
	\includegraphics[width=1\columnwidth]{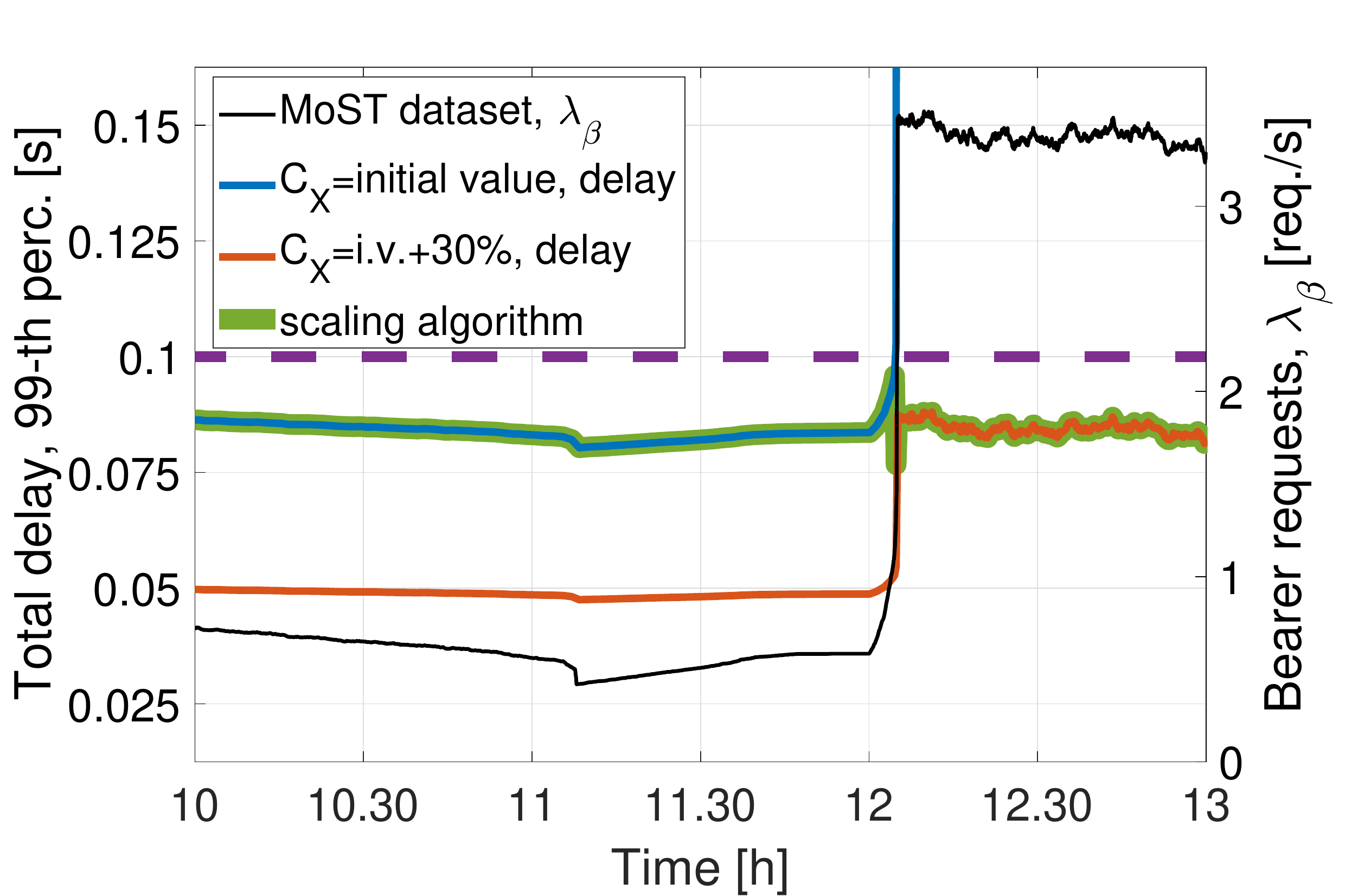}
\caption{Analysis exploitation: $99-$th percentile of the bearer instantiation delay vs. time, when the MoST trace is used. Black line: bearer request arrival rate from the trace; dashed, purple line: target value of the $99-$th percentile. Performance obtained with:   the initial computational capability of the EPC entities (blue line), capability increased by 30\% (red line),   capability determined through the scaling algorithm (green line).}
\label{delay_exploitation2}
\end{figure}

\section{Related Work}\label{sec:related} 
IoT support through cellular networks has recently attracted
significant attention by both the scientific community and the
standardization fora.

A first body of works deal with the requirements posed by IoT scenarios, and how 5G networks can cope with them. As an example, \cite{latencyiot} quantifies the latency and capability requirements for the main IoT scenarios, from factory automation to parking machines, and discusses the improvement needed to both the radio access and core networks. The CIoT optimization and the role of MME are, however, not accounted for in~\cite{latencyiot}, which mainly focuses on the SGW/PGW gateways. \cite{C-robostics}~has a narrower focus, namely, cloud robotics, and presents a working prototype; however, the latency introduced by the core networks and the entities therein is not taken into account. 
\cite{beyene2017nb} and \cite{darsena2019cloud}, instead, present the impact on the radio access, e.g., on Cloud-RAN, of IoT-specific physical layers, such as NB-IoT. 
Note that these works tackle the  IoT support in cellular networks from the perspective of centralized IoT transmission coding/decoding, but they do not take into account IoT optimized control procedures, such as the CIoT.

Among the studies that do account for the core network, most, including~\cite{orchestration,iotmulticast}, envision a virtualized  network, where  network functions are implemented through VNFs. Unlike our work, \cite{orchestration} does not specifically target EPC or any of its entities. The authors of~\cite{iotmulticast}, focusing on multicast traffic in IoT scenarios, specifically study the MME delay. Their proposed solution is to endow the SGW with some of the MME tasks, the opposite of the CIoT optimization we consider in our study. 

The use of NFV and SDN, for the implementation of the
EPC under massive IoT
traffic conditions, has been discussed in \cite{jain2016comparison},
while 
enhancements to  the standard EPC can be 
found, e.g.,  in \cite{nagendra2016lte}. That work  introduces  new entities
in the network architecture, which are specifically 
devoted to the IoT support. Importantly, although such solutions yield
a remarkable performance improvement, they inevitably involve  significant changes to
the standard.

Analytical models of IoT systems have been developed for specific application use cases, like management \cite{7336570}, opportunistic crowd sensing in vehicular scenarios \cite{7857676}, or ambient backscatter devices \cite{7820135}.
Other works have presented theoretical models for the study of networking aspects such as  the performance of middleware protocols \cite{7997451}, implemented between the application and the transport layer, or of the random access procedure in NB-IoT \cite{8288195,8605340}.
Note, however, that none of the above works investigates the critical role of the EPC control plane in IoT-based systems; 
indeed few studies exist on the characterization of the overhead and service delay when the EPC handles massive IoT
data traffic. 
In this context, the studies that are the most relevant to ours are \cite{hasegawa2015joint} and its extension \cite{abe2017design}, which present a scheme for aggregating multiple IoT bearers and analyze the gain that is obtained with respect to the standard procedure. Such works, however, are based on deterministic inter-packet transmission time for MIoT sources and do not address the most recent and efficient 3GPP specifications for IoT support. 
Likewise, the study in \cite{prados2017modeling} analytically
evaluates the EPC control procedures (not specific to CIoT optimization) considering a simple IoT traffic model, coexisting with other cellular traffic sources. Furthermore, unlike our work, both \cite{prados2017modeling} and \cite{abe2017design} derive only  the {\em average} processing latency of standard bearer establishments. A more comprehensive study on EPC control procedures has been presented in our conference paper \cite{nostro-globecom}, which, however, does not address any delay analysis. 

To the best of our knowledge, our work is the first one presenting the delay distribution under the CIoT optimization and deriving an exponential inter-arrival time for MIoT bearer establishments at the MME. It is important to stress that, in our paper, exponential inter-arrival times are not an assumption, but the result of the analysis in \sref{sec:input_MME}, which is based on the 3GPP MIoT traffic model \cite{3gppmtc}  and has been validated in Sec.\,\ref{sec:results}. A sketch of our work has been presented in our poster publication \cite{nostro-poster}. Finally, we remark that the goal of our work differs significantly from that of  \cite{laner2013traffic}, which presents the Markov Modulated Poisson Process (MMPP) model for individual IoT sources that  we adopt to develop our analysis and that is in accordance with the 3GPP traffic model for IoT. Indeed, \cite{laner2013traffic} investigates large-scale IoT scenarios via simulation only: it does not present any analytical model of the EPC or of its control procedures under IoT traffic support.

\section{Conclusions}\label{sec:conclusions} 

Effectively dimensioning the cellular core network and evaluating its latency performance are crucial tasks for the support of massive IoT applications. 
Observing that the MME has a major impact on the control-plane latency, we characterized the statistics of  the latency it introduces. In particular, we derived closed-form expressions that link the number of IoT devices to the inter-arrival times of bearer instantiation requests. Then, leveraging these results and an M/D/1-PS queue model of the MME, we characterized the latency experienced by IoT traffic in the cellular core. Importantly, our analysis is also based on findings obtained by measuring and profiling the performance of a real-world EPC implementation. 
Furthermore, using both the 3GPP traffic model and a real-world, large-scale mobility trace, we validated via  extensive simulations  the distribution of request inter-arrival times and of the IoT traffic latency.  

We demonstrated that our model and results can be exploited to dimension the computation capabilities of the  entities of the virtualized cellular core as the offered IoT traffic load varies.
Future work will leverage  our results to design more advanced algorithms for the scaling of the resources allocated to  virtualized EPC entities as the traffic load varies, and assess their performance using real-world EPC implementations.

\section*{Acknowledgments}
This work was supported by the European Commission through the H2020 project 5G-EVE (Project ID 815074). The work of Christian Vitale was also supported by the European Union’s Horizon 2020 Research and Innovation Programme under Grant 739551 (KIOS CoE) and from the Republic of Cyprus through the Directorate General for European Programmes, Coordination, and Development.

\bibliographystyle{IEEEtran}
\bibliography{vepc}

\end{document}